\title{Diffusion and Auction on Graphs}
\author{
	Bin Li$^1$
	\and
	Dong Hao$^1$\footnote{Dong Hao is the corresponding author.}\and
	Dengji Zhao$^2$
	\And
	Makoto Yokoo$^3$
	\affiliations
	$^1$University of Electronic Science and Technology of China\\
	$^2$ShanghaiTech University $\quad$ $^3$Kyushu University\\
}
\begin{document}

\maketitle
\begin{abstract}
	Auction is the common paradigm for resource allocation which is a fundamental problem in human society. Existing research indicates that the two primary objectives, the seller's revenue and the allocation efficiency, are generally conflicting in auction design. For the first time, we expand the domain of the classic auction to a social graph and formally identify a new class of auction mechanisms on graphs. All mechanisms in this class are incentive-compatible and also promote all buyers to diffuse the auction information to others, whereby both the seller's revenue and the allocation efficiency are significantly improved comparing with the Vickrey auction. It is found that the recently proposed information diffusion mechanism is an extreme case with the lowest revenue in this new class. Our work could potentially inspire a new perspective for the efficient and optimal auction design and could be applied into the prevalent online social and economic networks.
\end{abstract}
\let\thefootnote\relax\footnotetext{Email address: libin@std.uestc.edu.cn, haodong@uestc.edu.cn, zhaodj@shanghaitech.edu.cn, yokoo@inf.kyushu-u.ac.jp}
\section{Introduction}
Auction has been a common paradigm for allocating resources \cite{krishna2009auction}, its applications vary from assigning the sponsored links to advertisers \cite{edelman2007internet} to allocating the wireless spectrum worldwide \cite{cramton2013spectrum}. Auctions ask and answer this question: who should get the goods and at what prices? Typically, there are two goals in auction design. One is to design auctions that maximize social welfare, i.e., allocating the goods in a way that maximizes the total utilities of all participants. Another line focuses on the seller's revenue. Traditionally, there are two ways to improve the revenue: use the classic optimal auction \cite{Myerson1981OptimalAD} or gather more people to join the auction. The former method requires prior knowledge of buyers' valuations to compute a reserve price. The buyer set needs to be fixed in advance, and the mechanism will be invalid if a seller can hardly have a panorama (value distribution) of all the underlying bidders. As a comparison, the latter method seems to be more effective and adaptive comparing with optimal auctions. A classic result in \cite{Bulow1996Auction} shows that when the buyers' valuations draw i.i.d from a regular distribution, then the expected revenue of a second price auction with one additional bidder is no less than that given by the optimal auction. However, except using costly advertising, \emph{where these additional buyers could come from?}


The natural way is to ask the current bidders to help to diffuse the sale information to the outsiders.
However, the selfish bidders have no incentives to do so. From the point of view of the whole economic network, this \emph{information blocking} not only decreases the seller's revenue, but also reduces the allocation efficiency of goods. Here are two examples. On online selling websites like eBay, the final price of an item depends on the population of interested buyers. Normally, the buyers have no incentives to invite other competitors and the information are limited to people who see this sale. One activity on online social networks that is frequently seen is people sharing clicking links (such as a new product) to their friends. This is actually an informal way to use information diffusion to recruit more people to a sale. Essentially, the above prevalent problems are about how to design a good market for the sale of products or services where the participants are encouraged to invite more people to the mechanism.
The problem can be seen as auctions constrained with (negative) externalities \cite{bhattacharya2011allocations,haghpanah2013optimal,belloni2017mechanism}. However, for our problem, the underlying social connections of all participants gives the externality constrains in a very complicate way, where previous common techniques on externalities can hardly help.
The most related works are from \cite{li2017mechanism,Li2018CustomerSI,Zhao2018Multi}. \cite{li2017mechanism} initiates this problem and proposes one mechanism which incentivizes buyers to diffuse seller's sale information to their neighbors. \cite{Zhao2018Multi} extends this to a multi-unit setting where the seller has multiple homogeneous items for sale and each buyer wants at most one of them. \cite{Li2018CustomerSI} further investigates this problem in constrained economic networks where both the allocation efficiency and revenue can be obtained.

We study the problem of auction on graphs from a more general point of view.
In the first part of this paper, we propose a class of mechanisms named \emph{critical diffusion mechanism (CDM)} on unweighted graphs. We shows that the method proposed in \cite{li2017mechanism} is a special case of this class and is the one with the lowest revenue. In the second part, we further study this problem on weighted graphs which has never been tackled in the literature.
We propose the very first solution called \emph{weighted diffusion mechanism (WDM)}. In both CDM and WDM, each buyer's optimal choice is to report her valuation truthfully and diffuse the sale information to all her neighbors. Our theoretical results can be applied to the research and application fields such as crowdsourcing, sharing economics and viral marketing. By using our mechanisms, the buyers on eBay are happy to do the invitations, the diffusion of an innovation will be maximized as well.
\section{Preliminaries}
\newtheorem{defn}{Definition}
\newtheorem{prop}{Proposition}
\newtheorem{lemma}{Lemma}
\newtheorem{theorem}{Theorem}
\newtheorem{corollary}{Corollary}
\newtheorem{exmp}{Example}
Consider a seller selling an item in a digraph $G=(V,E)$ where $V$ is the node set with $|V|=n$ and $E$ is the edge set. Besides the seller $s$, each node $i\in V$ owns a private type $t_i=(v_i,r_i)$, where $v_i$ is her valuation on the item and $r_i$ is the set of her neighboring nodes. Node $i$ can only communicate with $j$ through existing link $(i,j)\in E$, and a node could participate in the auction only if someone of her neighbors has joined in the auction and further informs her of the sale. Initially, only the seller's neighbors are informed of the sale.
Let $w(i,j)$ be the weight on edge $(i,j)$, for instance in a distribution network $w(i,j)$ can be represented as the freight of transporting the item from $i$ to $j$. We assume graph $G$ contains no negative cycles and $w(i,j)$ is known once $i$ and $j$ join in the auction.
In Figure \ref{summary}, node $s$ owns one item for sale and only node $A$ and $B$ are informed of the sale at first. Node $C$ cannot join in the auction if $B$ does not inform her of the sale. The values in each circle are nodes' private valuations and the red numbers represent for the resided weights.

We model the selling problem as an auction. Formally, denote $t_i=(v_i,r_i)$ by node $i$'s private type and $t=(t_1,\cdots,t_n)$ by the type profile of all nodes. Let $t_{-i}=\{t_1,t_2,\cdots,t_{i-1},t_{i+1},\cdots,t_n\}$ be the type profile of all other nodes except $i$, i.e., $t=(t_i,t_{-i})$. Let $T_i=\mathbb{R}_{\geq 0}\times \mathbb{P}(V)$ be the type space of $i$ where $\mathbb{P}(V)$ is the power set of $V$ and $T=\times T_{i\in V\setminus\{s\}}$ be the type profile space of all nodes.
As usual, let $t_i'=(v'_i,r'_i)\in T_i$ be the reported type of node $i$ where $r'_i\subseteq r_i$ means that $i$ diffuses the sale information to nodes in $r'_i$. The set $r'_i$ is limited to $\mathbb{P}({r_i})$ as node $i$ is not aware of other nodes who are not her neighbors. Denote $t'=(t_1',\cdots,t_n')$ by the reported type profile of all nodes where $t'_i=nil$ when $i$ has never been informed of the sale or $i$ has no interest in the auction.

\begin{defn}
	Given a reported type profile $t'$ and a node $i$ with $t_i'\neq nil$, define a \textit{trading path} from seller $s$ to $i$ as
	an ordered sequence of nodes $(a_1,a_2,\cdots,a_l,a_{l+1}=i)$ such that $a_1\in r_s$ and for $1<j\leq l+1$, $a_j\in r_{a_{j-1}}'$.
\end{defn}
Among all trading paths from the seller to node $i$, denote $L^*_i(t')$ by the shortest one, i.e., $L^*_i(t')=\arg \min_{L\in \mathbb{L}_i(t')}\sum_{(i,j)\in L}w(i,j)$ where $\mathbb{L}_i(t')$ is the set of all possible trading paths of node $i$. Since $G$ contains no negative cycles, then $L^*_i(t')$ is a simple path from $s$ to $i$.

\begin{defn}
	A \textit{diffusion auction} $\mathcal{M}=(\pi,x)$ defined on $G$ consists of two components: an \textit{allocation rule} $\pi:T\rightarrow \mathbb{L}$ which determines a single trading path $\pi(t)$, where $\mathbb{L}$ is the set of all possible trading paths in $G$ and $\pi(t)$ is a selected path whose terminal node will be allocated with the item; and a \textit{payment rule} $x=\{x_i\}_{i\in V\setminus\{s\}}$ which is the amount that each node pays, where $x_i:T\rightarrow \mathbb{R}$ is the payment rule for $i$.
\end{defn}

Given an allocation $\pi(t)$, the \textit{social welfare} in diffusion auctions is defined as $v(\pi(t))-\sum_{(i,j)\in \pi(t) }w(i,j)$ where $v(\pi(t))$ is winner's true valuation. Here the total weights $\sum_{(i,j)\in \pi'(t) }w(i,j)$ is perceived as the externalities incurred when passing an item from the seller to the winner. An allocation $\pi^*$ is efficient if it maximizes the social welfare for every $t\in T$, i.e., $\pi^* \in {\arg\max}_{\pi' \in \Pi} v(\pi'(t))-\sum_{(i,j)\in \pi'(t) }w(i,j)$
where $\Pi$ is the feasible allocation set. Denote $W^*(t)$ by the social welfare under efficient allocation $\pi^*$.

In our model, each node has quasilinear utility function, meaning that given $i$'s type $t_i$ and an allocation $\pi(t')$, $i$'s utility is defined as: $u_i(t_i, t', (\pi,x)) = v_iz_i(t')- x_i(t')$ where $z_i(t')$ is an indicator variable which is $1$ if $i$ wins the item and $0$ otherwise. A diffusion mechanism is \textit{individually rational} if for each node, her utility is non-negative when she reports her valuation truthfully regardless of her diffusion strategies, i.e., $u_i(t_i, ((v_i,r'_i),t'_{-i}), (\pi,x))\geq 0$.
And it is \textit{incentive-compatible} (IC) if reporting true types is a dominant strategy for every node in the auction, that is
$u_i(t_i, (t_i,t_{-i}^\prime), (\pi,x)) \geq u_i(t_i, (t_i^\prime, t_{-i}^{\prime\prime}), (\pi,x))$ for all $i\in V\setminus\{s\}$. Note that on the right side of the inequality, we use $t_{-i}^{\prime\prime}$ to replace $t_{-i}^{\prime}$ because the set of nil type nodes may change when $i$'s report becomes $t'_i$.
Given a reported type profile $t^\prime$ and a mechanism $\mathcal{M} = (\pi, x)$, the seller's \textit{revenue} generated by $\mathcal{M}$ is defined by $Rev^{\mathcal{M}}(t^\prime) = \sum_{i\in V\setminus\{s\}} x_i(t^\prime)-\sum_{(i,j)\in \pi(t')}w(i,j)$.

 	\begin{figure}[t]
 	\centering
 	\includegraphics[width=2.2in]{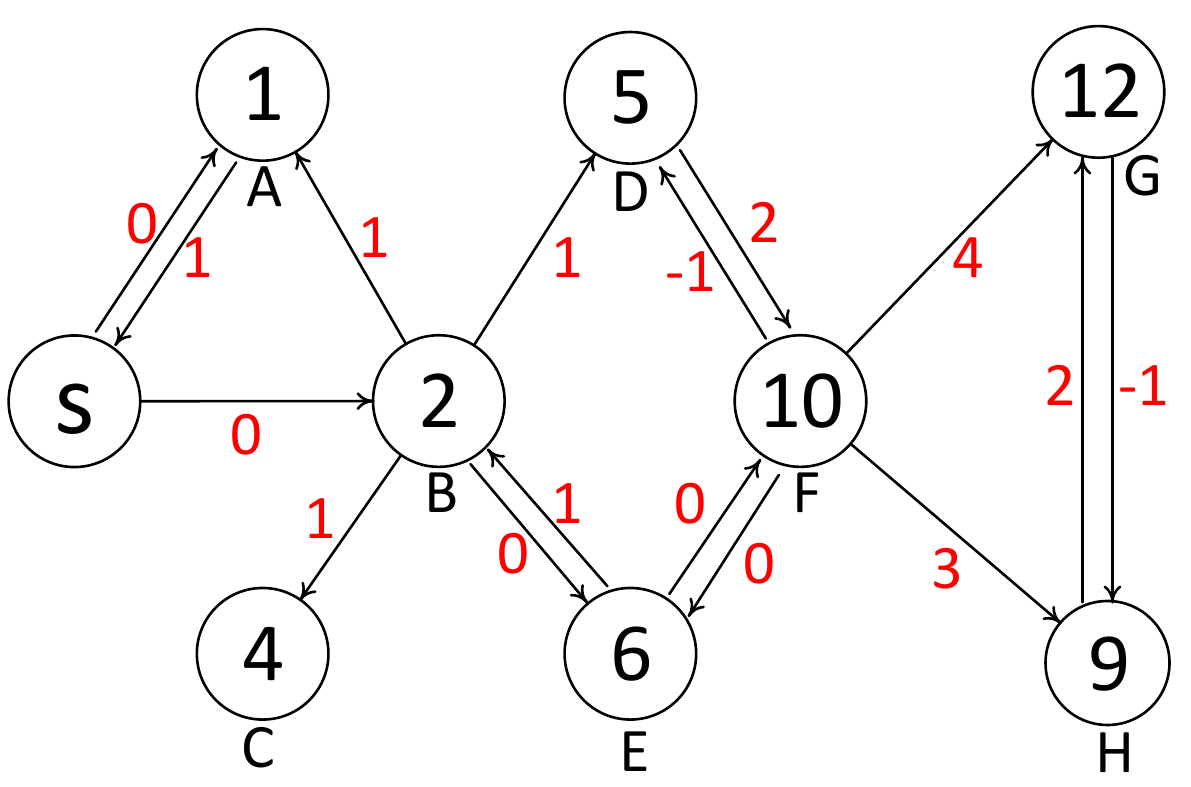}
 	\caption{An example of weighted graphs where $C^*_G(t)=\{B,F,G\}$ and $L^*_F(t)=\{B,E,F\}$.}\label{summary}
 \end{figure}

The VCG mechanism \cite{vickrey1961counterspeculation,clarke1971multipart,groves1973incentives} is a generic truthful mechanism for achieving an efficient allocation. However, it has many serious practical problems \cite{rothkopf2007thirteen}. Under our framework, it can lead to a high deficit for the seller \cite{li2017mechanism}. Instead, we use the outcomes of the Vickrey auction \cite{vickrey1961counterspeculation}, a special case of the VCG mechanism, as the benchmark and design good diffusion auctions that outperform them.
In the Vickrey auction, the item is allocated to the highest bidder who is charged with the second highest bid. Since every node in the auction has no incentives to share the sale information, then the allocation efficiency and the seller's revenue equal the highest and second highest bid in $\{t_i\}_{i\in r_s}$ respectively.



\section{Auction Mechanism on Unweighted Graph}
This section investigates diffusion mechanisms in unweighted graphs where each edge's weight is zero.
In an unweighted graph, for any determined node $i\in V\setminus\{s\}$, there would be multiple simple paths from $s$ to $i$.
Moreover, there are several cut nodes which are shared by all these paths. These nodes are called \textit{critical diffusion nodes} of $i$ since without any of them, node $i$ cannot receive the sale information and cannot join in the auction.

\begin{defn}\label{critical diffusion nodes}
	Given a reported type profile $t^\prime$ and a node $i$ with non-nil reported type,
	define $C_i(t')=\{\bigcap L\}_{L\in \mathbb{L}_i(t')}$ as the set of $i$'s critical diffusion nodes where $\mathbb{L}_i(t')$ is the set of all feasible trading paths from the seller $s$ to node $i$.
\end{defn}
Notice that for any pair $j,k\in C_i(t')$, we have that either $j\in C_k(t')$ or $k\in C_j(t')$. Therefore there is an unique fully ordered set $C^{*}_i(t')=\{s_1, s_2,\cdots,s_k, s_{k+1},\cdots,i\}$ for all nodes in $C_i(t')$ such that $s_{j}$ is a critical diffusion node of $s_{j+1}$. This unique sequence is named as \textit{the critical diffusion sequence of $i$}. In addition, for any $j\in C^*_i(t')$, it is clear that $C^*_j(t')=\{s_1,s_2,\cdots,j\}$. For example, there are four simple paths from the seller to node $G$ in Figure \ref{summary}, any of which has to pass node $B$ and $F$, and therefore node $G$'s critical diffusion nodes are $\{B,F,G\}$. Then, we have $C^*_G=\{B, F, G\}$ and $C^*_F=\{B, F\}$.
Let $d_i$ be the set of nodes whose critical diffusion nodes include $i$, e.g., $d_F=\{F,G,H\}$. Clearly, if $i$ is removed from the graph, the nodes in $d_i$ cannot join in the sale.
Denote $t'_{x}$ by the partially reported types from set $x$ and $t'_{-x}=t'\setminus t'_{x}$ as the reported type profile when set $x$ is removed from the graph, where $x$ could be a set of edges, vertices or a mixture. For an edge $(i,j)$, $t'_{-\{(i,j)\}}$ means that node $j$ is removed from $i$'s diffusion strategy $r_i'$ with respect to $t'$. For example, we have $t_{-\{F\}}=\{t_i\}_{i\in V\setminus d_F}$ and $t_{-\{(B,D),(B,E)\}}=\{t_i\}_{i\in \{A,B,C\}}$.
\begin{defn}\label{alpha}
	Given a reported type profile $t'$, assume that the highest bidder is $m$ and her critical diffusion sequence is $C^{*}_m(t')=\{1, 2,\cdots,k, k+1,\cdots,m\}$. Define $\alpha_m=\emptyset$ and for an arbitrary $i\in C^{*}_m(t')\setminus m$, predefine an edge set $\alpha_i=\{(j,l)\in E\}_{j\in d_i}$ which has the following properties:
	\begin{itemize}
		\item[1.] Information blocking: node $i+1\notin t'_{-\alpha_i}$, meaning that the nodes in $d_{i+1}$ cannot join in the auction if $\alpha_i$ is removed from the graph.
		\item[2.] Node independence: for any two reported type profiles $t^{'1}$ and $t^{'2}$ which only differ in $t'_{d_{i+1}}$, $\alpha^1_i=\alpha^2_i$. This property ensures that $\alpha_i$ is independent of the strategies of nodes in $d_{i+1}$.
		\item[3.] Diffusion monotonicity: if $r_i'\subseteq r_i''$, then $t'_{-\alpha'_i}\subseteq t'_{-\alpha''_i}$. That is, the set of non-nil type in $t'_{-\alpha_i}$ is monotonically increasing with $r_i'$.
	\end{itemize}
\end{defn}
The first property requires the set should be a cut. The second property says when the mechanism designer chooses $\alpha_i$, her choice should not be affected by what happened in $d_{i+1}$. The third property further requires the choice of $\alpha_i$ should not ruin the monotonicity of information diffusion.

Based on $\alpha_i$, we now give a class of new mechanisms named critical diffusion mechanisms (CDM) in Alg. \ref{CSM}. In CDM, only nodes in $C^{*}_m(t')$ are considered as the candidates of the winner. In the allocation policy, nodes ordered higher in $C^{*}_m(t')$ are given priorities to win. The algorithm computes whether a node has the ability to win one after another and stops once the first qualified node is identified. In the following theorems, we prove that such a "sequential" allocation rule combined with the payments defined in Alg. \ref{CSM} provides a class of auction mechanisms with remarkable performances.

\begin{algorithm}[t]
	\SetKwInOut{Input}{\textbf{Input}}\SetKwInOut{Output}{\textbf{Output}}
	
	\BlankLine
	initialize $\pi(t')=\emptyset$ and $\{x_i(t')=0\}_{i\in V\setminus\{s\}}$\;
	locate the highest bidder $m$, break tie arbitrarily\;
	compute $C^{*}_m(t')$ and denote it by $\{1, 2,\cdots, m\}$\;
	
	\For{$i\leftarrow 1$ \KwTo $m$}{
		compute $\alpha_i$\;
		\If{$v_i=W^*(t'_{-\alpha_i})$}{
			set $\pi(t')$ to be any trading path of $i$ and $x_i(t')=W^*(t'_{-i})$\;
			break\;
		}
		\Else{
			set $x_i(t')=W^*(t'_{-i})-W^*(t'_{-\alpha_i})$\;
		}
	}
	\caption{Critical Diffusion Mechanism (CDM)\label{CSM}}
\end{algorithm}

\begin{theorem}\label{cdm_outcome}
	The critical diffusion mechanism proposed in Alg. \ref{CSM} is individually rational and incentive-compatible.
\end{theorem}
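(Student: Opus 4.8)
The plan is to prove both properties by exploiting the nesting $d_{i+1}\subseteq d_i$ along the critical sequence together with a single utility identity for the nodes the algorithm actually processes. First I would record that, since any node requiring $i+1$ also requires $i$, we have $d_{i+1}\subseteq d_i$; as deleting $i$ erases the whole node set $d_i$ while deleting $\alpha_i$ erases only $d_{i+1}$, this yields $W^*(t'_{-i})\le W^*(t'_{-\alpha_i})$. Next, for a truthful node $i$ lying on the highest bidder's critical path I would verify that its utility equals $W^*(t'_{-\alpha_i})-W^*(t'_{-i})$ in both cases the algorithm distinguishes: if $i$ wins, the winning condition $v_i=W^*(t'_{-\alpha_i})$ turns $v_i-x_i=v_i-W^*(t'_{-i})$ into exactly this quantity; if $i$ is a critical node processed before the winner, then $-x_i=W^*(t'_{-\alpha_i})-W^*(t'_{-i})$ is this quantity directly. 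Combined with the inequality above and the fact that every remaining node pays nothing, this already gives individual rationality.

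For incentive compatibility I would fix $t'_{-i}$ and treat the valuation and diffusion coordinates separately. For the valuation, fix $r'_i$ and vary the reported bid $b$. The crucial ingredient is property 2: every node $j$ preceding $i$ satisfies $i\in d_{j+1}$, so $\alpha_j$ and $W^*(t'_{-\alpha_j})$ are unaffected by $b$, and hence the winner/loser status of all earlier nodes, and therefore whether the loop ever reaches $i$, is independent of $b$ (if it never reaches $i$, then $i$'s utility is $0$ regardless of $b$, so truthfulness is trivially optimal). Assuming the loop reaches $i$, I would show the payment charged on winning, $W^*(t'_{-i})$, is independent of $b$ (since $i\in d_i$), and that in the losing region the reward $W^*(t'_{-\alpha_i})-W^*(t'_{-i})$ is also constant in $b$, because a losing $i$ is not the maximiser of $V\setminus d_{i+1}$. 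Consequently $i$'s utility passes through a single threshold: winning yields the $b$-constant value $v_i-W^*(t'_{-i})$ while losing yields the $b$-constant reward, and these cross exactly at the threshold, so reporting $b=v_i$ selects the larger of the two and no bid can beat it.

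For the diffusion coordinate I would fix the bid at the true $v_i$ and compare full diffusion $r_i$ with any $r'_i\subsetneq r_i$. By the identity above, $i$'s utility equals $W^*(t'_{-\alpha_i})-W^*(t'_{-i})$. The term $W^*(t'_{-i})$, being the largest valuation among $V\setminus d_i$, involves only nodes outside $d_i$ and is therefore independent of $i$'s diffusion set. The term $W^*(t'_{-\alpha_i})$ is, by property 3 (diffusion monotonicity), nondecreasing in $r'_i$, since enlarging $r'_i$ only enlarges the non-nil set surviving in $t'_{-\alpha_i}$ and hence weakly raises the maximum valuation there. Thus reporting all neighbours weakly maximises $i$'s utility, which together with the valuation step shows truthful reporting of both coordinates is dominant.

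The main obstacle is the monotonicity/threshold claim in the valuation step, because raising $b$ can promote $i$ to be the global highest bidder and thereby replace the sequence $C^*_m$ by the shorter $C^*_i$. I expect the delicate point to be verifying, via property 2, that this restructuring only truncates the sequence beyond $i$ while leaving the win/loss decisions and the values $\alpha_j, W^*(t'_{-\alpha_j})$ of all earlier nodes intact, so that the two regimes ``$i$ below the highest bidder'' and ``$i$ is itself the highest bidder'' glue into a single winning interval with one threshold. A minor separate case, when the highest bidder lies outside $d_i$ so that truthful $i$ is off-path with zero utility, must be checked directly, using $W^*(t'_{-i})\ge v_i$ to rule out a profitable deviation to winning.
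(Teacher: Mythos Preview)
Your proposal is correct and follows essentially the same route as the paper: both arguments hinge on the utility identity $W^*(t'_{-\alpha_i})-W^*(t'_{-i})$ for nodes the loop processes, use diffusion monotonicity (property~3) to handle the diffusion coordinate, use node independence (property~2) together with information blocking (property~1) to show that earlier nodes' win/loss decisions are unaffected by $i$'s report, and dispose of off-path nodes by observing that $W^*(t'_{-i})\ge v_m'>v_i$ when $m\notin d_i$. Your treatment is more explicit than the paper's (which compresses the valuation argument into a single inequality $v_i-W^*(t'_{-i})<W^*(t'_{-\alpha_i})-W^*(t'_{-i})$ without spelling out the threshold structure or the sequence-restructuring issue you flag), but the underlying decomposition and the key lemmas invoked are the same.
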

\begin{proof}
	Assume node $g$ is the winner in Alg. \ref{CSM}. For any node $i\notin C^{*}_m(t')$, her utility is zero.
	The only way for $i$ to change her utility is to increase her bid and becomes the highest bidder. In this case, she will be the winner according to Alg. \ref{CSM} and will pay the previous highest bid $v_m'$ which is greater than her true value $v_i$.
	For any node $i\in C^*_g(t')\setminus\{g\}$, her utility is $W^*(t'_{-\alpha_i})-W^*(t'_{-i})$. The latter term is independent of node $i$, and according to diffusion monotonicity, the former term is maximized by choosing a diffusion strategy $r_i'=r_i$. In addition, becoming the winner is also a bad choice for $i$ since  $v_i-W^*(t'_{-i})<W^*(t'_{-\alpha_i})-W^*(t'_{-i})$. Regarding to the winner $g$, her utility is $v_g-W^*(t'_{-g})=W^*(t'_{-\alpha_g})-W^*(t'_{-g})$. Because of diffusion monotonicity, it is no good for $g$ to give up the chance of winning through lowering her bid. According to the first and second properties of $\alpha_i$, the winner is still node $g$ no matter what strategies nodes in $C^{*}_m(t')\setminus C^{*}_g(t')$ choose. That is $x_i((t_i',t'_{-i}))=0$ for any $i\in C^{*}_m(t')\setminus C^{*}_g(t')$ and any $i$'s strategy $t_i'$. In a word, we conclude that reporting truthfully is a dominant strategy for every node.
\end{proof}
Next, we show that a CDM dominates the Vickrey auction, meaning that both the allocation efficiency and the seller's revenue are no less than that given in the Vickrey auction.
\begin{theorem}\label{cdm_do}
	The critical diffusion mechanism dominates the Vickrey auction.
\end{theorem}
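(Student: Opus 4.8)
The plan is to establish the two halves of dominance separately and pointwise, at an arbitrary reported profile $t'$ under truthful reporting. Since this section restricts to unweighted graphs, every $W^*(\cdot)$ is just the largest valuation among the currently participating nodes, the Vickrey efficiency equals $\max_{i\in r_s}v_i$, and the Vickrey revenue equals the second largest value in $\{v_i\}_{i\in r_s}$ (no one diffuses, so only the seller's neighbors participate). By Alg.~\ref{CSM} the winner $g$ lies on $C^*_m(t')$ and satisfies $v_g=W^*(t'_{-\alpha_g})$, so the CDM efficiency is exactly $v_g$ and the weight term in the revenue vanishes.

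For the efficiency half I would show the seller's neighbors are never removed by $\alpha_g$. Removing $\alpha_g$ leaves the set $V\setminus d_{g+1}$ participating, and a seller's neighbor $j$ has the singleton direct trading path $(j)$, so $C_j=\{j\}$; hence $j\in d_{g+1}$ would force $g+1=j$, contradicting $g\in C_{g+1}$ with $g\neq g+1$. Thus $r_s\subseteq V\setminus d_{g+1}$ and $v_g=W^*(t'_{-\alpha_g})=\max_{V\setminus d_{g+1}}v_i\geq\max_{i\in r_s}v_i$, with the case $g=m$ (where $\alpha_m=\emptyset$) immediate. This matches or beats the Vickrey efficiency.

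The revenue comparison is the heart of the proof. Reading off Alg.~\ref{CSM}, the winner pays $W^*(t'_{-g})$, each earlier node $i\in\{1,\dots,g-1\}$ pays $W^*(t'_{-i})-W^*(t'_{-\alpha_i})$, and later nodes pay $0$, so the seller's revenue is $W^*(t'_{-g})+\sum_{i=1}^{g-1}\big(W^*(t'_{-i})-W^*(t'_{-\alpha_i})\big)$. The crux, which I expect to be the main obstacle, is the identity $W^*(t'_{-\alpha_i})=W^*(t'_{-(i+1)})$: both operations leave precisely $V\setminus d_{i+1}$ able to participate, the latter because deleting the single node $i+1$ disconnects exactly the cone $d_{i+1}$, and the former because $\alpha_i$ is the cut isolating $d_{i+1}$. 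Proving that $\alpha_i$ blocks $d_{i+1}$ \emph{and nothing more} is where the three defining properties of $\alpha_i$ (information blocking for the lower bound, and the cut structure together with $d_{i+1}\subseteq d_i$ so nothing upstream is severed for the matching upper bound) must be invoked carefully. Granting the identity, the sum telescopes to $W^*(t'_{-g})+W^*(t'_{-1})-W^*(t'_{-g})=W^*(t'_{-1})$.

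It then remains to show $W^*(t'_{-1})=\max_{V\setminus d_1}v_i$ is at least the second largest value in $r_s$, which I would handle by splitting on whether the first critical node $1=s_1$ is a seller's neighbor. If $s_1\notin r_s$, then no seller's neighbor lies in $d_1$, so $r_s\subseteq V\setminus d_1$ and $W^*(t'_{-1})\geq\max_{i\in r_s}v_i$. If $s_1\in r_s$, then $d_1\cap r_s=\{s_1\}$, so $W^*(t'_{-1})\geq\max_{i\in r_s\setminus\{s_1\}}v_i$, and removing one element from $r_s$ leaves a maximum still at least the second largest value of $r_s$. In either case the CDM revenue dominates the Vickrey revenue. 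Everything except the cut identity is routine once that identity is in place.
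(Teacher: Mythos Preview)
Your overall architecture matches the paper's: compute $Rev$ from the payments on $C^*_g$, rearrange as $W^*(t'_{-1})$ plus a sum of differences, and then compare $W^*(t'_{-1})$ to the second-highest bid in $r_s$. Your treatment of that last comparison is fine and is essentially the paper's ``at most one of the seller's neighbors can equal $1$'' observation.

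The gap is the identity you flag as the crux. The equality $W^*(t'_{-\alpha_i})=W^*(t'_{-(i+1)})$ is \emph{false} for a general CDM, and the ``and nothing more'' direction you plan to extract from the three properties of $\alpha_i$ cannot be proved because it does not hold. Definition~\ref{alpha} only requires that $\alpha_i$ be an edge set with endpoints in $d_i$ that blocks $i+1$; it may well block additional nodes outside $d_{i+1}$. The paper's own $\beta$-instantiation illustrates this: in Figure~\ref{summary}, $\beta_B=\{(B,D),(B,E)\}$ disconnects $D$ and $E$ as well as $d_F=\{F,G,H\}$, whereas removing node $F$ disconnects only $d_F$; hence $W^*(t_{-\beta_B})=v_C=4<6=v_E=W^*(t_{-F})$. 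Indeed, if your identity held, every CDM would have revenue exactly $W^*(t'_{-1})$, contradicting the paper's Corollary that IDM is the \emph{lowest}-revenue member of the class.

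What the paper actually uses, and what you should use, is the one-sided inequality $W^*(t'_{-\alpha_i})\le W^*(t'_{-(i+1)})$, which follows immediately from Property~1 (information blocking) alone: since $i+1\notin t'_{-\alpha_i}$, all of $d_{i+1}$ is absent in $t'_{-\alpha_i}$, so the participating set under $t'_{-\alpha_i}$ is contained in that under $t'_{-(i+1)}$. With this inequality your rearranged sum gives $Rev(t')\ge W^*(t'_{-1})$ rather than equality, and that is exactly what the theorem needs. Similarly, in the efficiency half, do not claim that removing $\alpha_g$ leaves precisely $V\setminus d_{g+1}$; argue instead (as the paper does) that every seller's neighbor survives because $\alpha_g$ contains only edges with tail in $d_g$ and the direct edge from $s$ is untouched, so $\{t_i\}_{i\in r_s}\subseteq t'_{-\alpha_g}$.
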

\begin{proof}
	Since only nodes in $C^{*}_g(t)$ could have non-zero payments, then $Rev(t)=\sum_{i\in C^{*}_g(t)}x_i(t)$
	$=W^*(t_{-1})+\sum_{i\in C_g(t)\setminus\{1,g\}}({W^*(t_{-{i+1}})-W^*(t_{-\alpha_i})})\geq W^*(t_{-1})$. The inequity holds because of the first property of $\alpha_i$ which means $W^*(t_{-j})\geq W^*(t_{-\alpha_i})$ for any $j\in d_{i+1}$.
	
	Because there exists at most one node's type that belongs to $\{t_i\}_{i\in r_s}\cap \{t_{1}\}$, then $W^*(t_{-1})$ is at least the second highest bid in $\{t_i\}_{i\in r_s}$. In addition, due to the fact that $\{t_i\}_{i\in r_s}\subseteq t_{-\alpha_g}$ and $v_g=W^*(t_{-\alpha_g})$, we conclude that any critical diffusion mechanism dominates the Vickrey auction.
\end{proof}
Essentially, different choices of $\alpha_i$ offer different tradeoffs between the allocation efficiency and the revenue.
The information diffusion mechanism (IDM) proposed in \cite{li2017mechanism} is equivalent to a specific instantiation of the above CDM in which $\alpha_i$ is set to be the edge set $\{(j,i+1)\in E\}$ for any $i\in C^*_m(t)\setminus\{m\}$. According to different preferences of the mechanism designer, one can conveniently generate many such mechanisms. For example, here is a another specific instantiation of CDM: let $\beta_m=\emptyset$ and for any $i\in C^*_m(t)\setminus\{m\}$, $\beta_i$ is the minimum subset of $\{(i,j)\}_{j\in r_i}$, by cutting which the sale information cannot reach node $i+1$.

In Theorem \ref{cdm_outcome}, we proved that given any type profile $t$, the lower bound of the seller's revenue in any CDM is $W^*(t_{-1})$. This is exactly what the seller obtains in the IDM. Therefore, we directly get the following corollary.
\begin{corollary}
	In all critical diffusion mechanisms defined in Alg. \ref{CSM}, the information diffusion mechanism proposed in \cite{li2017mechanism} is the one with the lowest revenue.
\end{corollary}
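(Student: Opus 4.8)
The plan is to evaluate the revenue expression obtained in the proof of Theorem~\ref{cdm_do} at the specific instantiation corresponding to the IDM and verify that it attains the universal lower bound established there. Recall that for an arbitrary CDM the revenue was shown to be
\[
Rev(t) = W^*(t_{-1}) + \sum_{i\in C_g(t)\setminus\{1,g\}}\bigl(W^*(t_{-(i+1)}) - W^*(t_{-\alpha_i})\bigr),
\]
where every summand is non-negative: by the information-blocking property of $\alpha_i$, deleting $\alpha_i$ removes at least all of $d_{i+1}$, so the participant set under $t_{-\alpha_i}$ is contained in the one under $t_{-(i+1)} = t_{-d_{i+1}}$, whence $W^*(t_{-\alpha_i}) \le W^*(t_{-(i+1)})$. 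This already yields the bound $Rev(t)\ge W^*(t_{-1})$ for every mechanism in the class.

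First I would pin down the IDM as the instantiation with $\alpha_i = \{(j,i+1)\in E\}$ for each $i\in C^*_g(t)\setminus\{g\}$, i.e. the set of all edges entering $i+1$. The crucial observation is that deleting exactly the incoming edges of $i+1$ disconnects precisely the same nodes as deleting the vertex $i+1$ itself: any trading path reaching $i+1$ must use one of these edges as its final step, so $i+1$ and hence all of $d_{i+1}$ become unreachable, while any node outside $d_{i+1}$ retains a trading path avoiding $i+1$ and thus avoiding every edge in $\alpha_i$. Consequently the participant sets of $t_{-\alpha_i}$ and $t_{-(i+1)}$ coincide, giving $W^*(t_{-\alpha_i}) = W^*(t_{-(i+1)})$ for this choice.

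Substituting this equality into the revenue expression makes every summand vanish, so $Rev^{\mathrm{IDM}}(t) = W^*(t_{-1})$; equivalently, the winner's payment $W^*(t_{-g})$ together with the interior payments $W^*(t_{-i}) - W^*(t_{-(i+1)})$ telescopes to $W^*(t_{-1})$. Combining $Rev^{\mathrm{IDM}}(t) = W^*(t_{-1})$ with the lower bound $Rev(t)\ge W^*(t_{-1})$ valid for every CDM shows that no mechanism in the class can generate strictly less revenue than the IDM, which is exactly the corollary.

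The only point demanding genuine care, as opposed to routine telescoping, is the identification step: checking that the IDM's choice $\alpha_i = \{(j,i+1)\in E\}$ is a legitimate member of the class (so that it satisfies information blocking, node independence, and diffusion monotonicity) and that it produces the \emph{exact} equality $W^*(t_{-\alpha_i}) = W^*(t_{-(i+1)})$ rather than merely the inequality used for the lower bound. Once this equality is secured, all the remaining analytic content has already been supplied by Theorem~\ref{cdm_do}, and the corollary follows immediately.
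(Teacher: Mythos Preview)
Your proposal is correct and follows essentially the same route as the paper: the paper simply notes that the lower bound $Rev(t)\ge W^*(t_{-1})$ from the proof of Theorem~\ref{cdm_do} is exactly what the IDM achieves, and then states the corollary without further argument. Your write-up supplies the justification the paper leaves implicit, namely that for $\alpha_i=\{(j,i+1)\in E\}$ the participant sets under $t_{-\alpha_i}$ and $t_{-(i+1)}$ coincide so each telescoping summand vanishes; but this is an elaboration of, not a departure from, the paper's reasoning.
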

To give an intuitive description of CDM, we give a running example with respect to $\beta_i$ in Figure \ref{summary}. Firstly, G is the highest bidder and $C^*_G$ is $\{B,F,G\}$ by Def. \ref{critical diffusion nodes}. Then, let $i=B$, if she does not inform of the sale to D nor E, then $i+1=F$ cannot join in the auction. Note that in Figure \ref{summary}, $\beta_B=\{(B,D),(B,E)\}$ is one, and the minimum one such edge cut set. B loses the item since C is the highest bidder after removing $\beta_B$ from the graph. If B is removed from the graph, the highest bidder becomes A. According to the last line in Alg. \ref{CSM}, B's payment is $v_A-v_C=1-4=-3$, i.e., the seller pays 3 to B. In a similar way, $\beta_F=\{(F,G),(F,H)\}$. Since F is the highest bidder after deleting $\beta_F$, then F wins the item and the algorithm stops here. Since E is the highest if F is removed, then according to lines 6-8 in Alg. \ref{CSM}, F pays $v_E=6$. Others pay zero. Finally the seller's revenue is $-3+6=3$ which is greater than 1$-$the revenue in the Vickrey auction.

\section{Auction Mechanism on Weighted Graph}
This section studies auction mechanisms on general weighted graphs. There are two key differences make the problem challenging. On the one hand, $L^*_m(t)$ (or $\pi^*(t)$) could be any simple path from the seller $s$ to $m$ on an unweighted graph while $C^*_m(t)$ is a set of some cut nodes which may or may not be adjacent. For a node $i$ in $L^*_m(t)$ but not in $C^*_m(t)$ (for instance, E in Figure 1), no matter what her strategy $t'_i$ is, she cannot be in $C^*_m((t'_i,t_{-i}))$ since $m\in t_{-i}$.
However, on a weighted graph, since the valuations (bids) are mingled with edge weights, such nodes could affect the efficient allocation by strategic diffusion.
One the other hand, under the unweighted settings, only nodes in $C^{*}_m(t)$ are considered as the candidates of the winner. Nonetheless, nodes in $L^*_m(t)\setminus C^{*}_m(t)$ cannot be omitted on weighted graphs. For example, in Figure 1, if the weight of edge $(D,F)$ is $5$, then if $E$ does not diffuse to $F$, then she becomes the winner (also a critical diffusion node) in the auction.
In this section, we propose the very first mechanism that satisfies all the desired properties on general weighted graphs.

\begin{defn}\label{inter_nodes}
	For any node $i$ and its neighbor $j$, if $j$ has neighbor $k\in r_j(k\neq i)$, then $j$ is an intermediary of $i$.
\end{defn}
\begin{algorithm}[t]
	\SetKwInOut{Input}{\textbf{Input}}\SetKwInOut{Output}{\textbf{Output}}
	\BlankLine
	initialize $\pi(t')=\emptyset$\;
	compute $\pi^*(t')$,  break tie arbitrarily\;
	denote $\pi^*(t')$ by $L_m^*(t')=\{1^*,2^*,\cdots,q^*=m\}$\;
	\For{$i\leftarrow 1^*$ \KwTo $q^*$}{
		compute $\gamma_i$\;
		\If{$i$ is allocated the item in $\pi^*(t'_{-\gamma_{i}})$}{
			set $\pi(t')=L^*_i(t')$\; 
			break\;
		}
	}
	\caption{The allocation policy of WDM\label{wdm_all}}
\end{algorithm}
For instance, in Figure 1, $B$'s intermediaries include nodes $A$, $D$ and $E$. Denote $I_{i}$ by the set of $i$'s intermediaries. Given a reported type profile $t'$, let $m$ be the winner in $\pi^*(t')$ where $\pi^*(t')=L^*_m(t')=\{1^*,2^*,\cdots,q^*=m\}$. Since the graph contains no negative cycles, then according to the optimal substructure property of the shortest path \cite{bondy1976graph}, for any $i^*\in L^*_m(t')$ we have $L^*_{i^*}(t')=\{1^*,2^*,\cdots,i^*\}$.
\begin{defn}\label{gamma}
	Given a reported type profile $t'$ and $\pi^*(t')$, define $\gamma_m=\emptyset$ and for any $i^*\in L^*_m(t')\setminus\{m\}$, let $\gamma_{i^*}$ be the edge set $\{(i^*,j)|j\in I_{i^*}\cup \{(i+1)^*\}\}$.
\end{defn}
Intuitively, $\gamma_{i^*}$ is an ``edge cut set" following $i^*$ since once  $\gamma_{i^*}$ is removed from the graph, nodes in $L^*_m(t')\setminus L^*_{i^*}(t')$ can only be reached via other nodes but not $i^*$.
In Figure \ref{summary}, $\pi^*(t')=\{B,E,F\}$ and $\gamma_{B}=\{(B,i)|i\in \{A,D,E\}\}$. When the edge set $\gamma_B$ is removed from Figure \ref{summary}, all simple paths to node $E$ or $F$, if any, do not pass $B$.

Now we propose the weighted diffusion mechanism (WDM) for general weighted graphs. The allocation policy is given in Alg. \ref{wdm_all}.
In the WDM, we allocate the item to a node $i$ along $L^*_m(t')$. Node $i$ should be the first node which satisfies $v'_i-\sum_{l\in L^*_i(t'_{-\gamma_i})\setminus \{i\}}w(l,l+1)=W^*(t'_{-\gamma_i})$. That is, after removing $\gamma_i$, she is allocated with the item in the efficient allocation with respect to the remaining graph. Note that with such an allocation rule, the nodes who are not in $C^{*}_m(t')$ could also win the auction. We give a running example in Figure \ref{summary}. Firstly, identify the efficient allocation path $B\rightarrow E\rightarrow F$. Since it is node $C$ instead of $B$ who wins in $\pi^*(t_{-\gamma_B})$, we move to check node $E$, but $E$ also loses the auction. Thus finally, node $F$ wins the item.

Before precisely characterizing the payment policy of the WDM, we need another important concept below.

\begin{defn}\label{O_nodes}
	For node $i \in L^*_g(t')\setminus\{g\}$ where $g$ is the winner, we call her a secondary node if
	$i$ wins the item in $\pi^*(\{t'_{-\gamma_i}\setminus t_g'\}\cup (v_g'=nil, r'_g))$ where $g$'s reported type is replaced by $(nil,r_g')$ in $t'_{-\gamma_i}$.
\end{defn}
If node $i$ is a secondary node, then $i$ will be the winner in $\pi^*(t'_{-\gamma_i})$ when $g$ does not bid ($v_g'=nil$). Furthermore, if winner $g\neq m$ beats node $i$ in the efficient allocation $\pi^*(t'_{-\gamma_i})$, so will node $m$ and this violates the definition of secondary nodes. Therefore, the secondary nodes could exist only when the winner is $m$.
Since we search for the winner from head to tail along $L^*_m(t')$, it is important that these secondary nodes are winner's critical opponents. The full characterization of the payment policy is given in Alg. \ref{wdm_pay}.

\begin{algorithm}[t]
	\SetKwInOut{Input}{\textbf{Input}}\SetKwInOut{Output}{\textbf{Output}}
	
	\BlankLine
	initialize $\{x_i(t')=0\}_{i\in V\setminus\{s\}}$ and $B_g^*(t')=0$\;
	let $L^*_g(t')$ be the allocation achieved in Alg. \ref{wdm_all}\;
	denote $\tilde{w}(i,j)=\sum_{L^*_j(t'_{-\gamma_i})\setminus\{j\}}w(l,l+1)$\;
	\For{$i\in L^*_g(t')\setminus\{g\}$}{
		compute $\gamma_i$\;
		set $x_{i}(t')=W^*(t'_{-{i}})-W^*(t'_{-\gamma_{i}})$\;
		\If{$i$ is a secondary node}{
			update $B_g^*(t')=\max\{B_g^*(t'),v'_i-\tilde{w}(i,i)+\tilde{w}(i,g)$\}\;
		}
	}
    update $B_g^*(t')=\max\{B_g^*(t'),W^*(t'_{-g})+\tilde{w}(g,g)\}$\;
	set $x_g(t')=B_g^*(t')$\;
	\caption{The payment policy of WDM\label{wdm_pay}}
\end{algorithm}
%


The value $B_g^*(t')$ computed in Alg. \ref{wdm_pay} can be explained as the ``critical value" of the winner in our diffusion model.
There are two criteria that the ``critical value" should satisfy if $g$ wants to win the item.
Firstly, since $g$ wins the item in $\pi^*(t'_{-\gamma_g})$, for any $i \in -\gamma_g (i\neq g)$, the inequality $v_g'-\tilde{w}(g,g)>v_i'-\tilde{w}(g,i)$ must hold, where $\tilde{w}(i,j)$ denotes the total weights of the shortest trading path to $j$ with respect to $t'_{-\gamma_i}$. To ensure that $g$ wins, her bid $v_g'$ is at least
$$\max_{i \in -\gamma_g (i\neq g)}\{v_i'-\tilde{w}(g,i)+\tilde{w}(g,g)\}.$$ Note that  $\tilde{w}(g,i)$ and $\tilde{w}(g,g)$ are independent of $r_g'$ and the node set with non-nil types in $-\gamma_g$ is minimized when $r_g'=\emptyset$, therefore $\max_{i \in -\gamma_g (i\neq g)}\{v_i'-\tilde{w}(g,i)+\tilde{w}(g,g)\}$ is minimized when $g$ diffuses the sale information to no one, in which case $\max_{i \in -\gamma_g (i\neq g)}\{v_i'-\tilde{w}(g,i)\}=W^*(t'_{-g})$. That is, node $g$ could win the item by reporting $(v_g^{'1}=W^*(t'_{-g})+\tilde{w}(g,g),r_g'=\emptyset)$.
On the other hand, due to the fact that all nodes in $L^*_g(t')\setminus \{g\}$ have priorities to win the item, winner $g$ has to beat all of them one after another. Specifically, given a node $i\in L^*_g(t')\setminus\{g\}$ and a reported type profile $t'_{-\gamma_i}$, if node $i$ is not a secondary node, then it suffices for $g$ to beat $i$ by bidding $v_g^{'1}$. If node $i$ is a secondary node, then according to Alg. \ref{wdm_all}, line 6, the winner must beat node $i$ in the efficient allocation $\pi^*(t'_{-\gamma_i})$. That is to say, we must have $v_g'-\tilde{w}(i,g)> v'_i-\tilde{w}(i,i)$ which leads to $v_g'> v'_i-\tilde{w}(i,i)+\tilde{w}(i,g)$. Denote the set of all secondary nodes by $SN^*$, above argument means that $v_g'$ is at least
$$\max_{i\in SN^*}\{v'_i-\tilde{w}(i,i)+\tilde{w}(i,g)\}.$$ It is not hard to see that the cardinality of $SN^*$ is minimized when $g$ spreads the sale information to all her neighbors, and therefore node $g$ could beat all secondary nodes by reporting a type with bid $v_g^{'2}=\max_{i\in SN^*}\{v'_i-\tilde{w}(i,i)+\tilde{w}(i,g)\}$ and diffusion $r_g'=r_g$.
Therefore, according to the above analysis, winner $g$ can win the item by bidding at least $\max\{v_g^{'1},v_g^{'2}\}$.

In Figure \ref{summary}, regarding whether node $E$ wins the item in the efficient allocation after cutting $\gamma_E=\{(E,B),(E,F)\}$, since node $E$ would win the item in $\pi^*(t'_{-\gamma_E})$ if the winner $F$ does not bid, therefore, node $E$ is a secondary node (actually the only one). According to Alg. \ref{wdm_pay}, winner $F$ will pay a price of at least $v_E-\tilde{w}(E,E)+\tilde{w}(E,F)=v_E-w(B,E)+(w(B,D)+w(D,F))=6-0+3=9$. Since $W^*(t_{-F})+\tilde{w}(F,F)=6$, then the winner pays $9$ to the seller. One could further check that node $B$ and $E$ pay $-2$ and $0$ respectively. The seller's revenue is $-2+0+9-0=7$ which is greater than $1$, the revenue obtained in the Vickrey auction.

The identification of secondary nodes and the characterization of $B^*_g(t')$ are the key techniques in proving the property of incentive compatibility. Combining with the following two lemmas, we prove that WDM is IR and IC.

\begin{lemma}\label{O_winner}
	If there exist secondary nodes in Alg. \ref{wdm_all}, then the winner cannot increase her utilities by misreporting.
\end{lemma}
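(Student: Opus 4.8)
The plan is to recognize the winner's payment $B^*_g(t')$ as a critical value and to reduce the lemma to three properties of this value: that it does not depend on the winner's own valuation, that it is minimized by full diffusion, and that it lies below the winner's true value. Since secondary nodes exist, the remark following Definition~\ref{O_nodes} forces the winner $g$ to be the efficient winner $m$, so $\gamma_m=\emptyset$, $\pi^*(t_{-\gamma_m})=\pi^*(t)$, and the winner's truthful utility is $v_m-B^*_m(t)$. Once the three properties are in hand, a standard threshold argument finishes the claim.

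First I would verify that $B^*_m$ is independent of $m$'s reported bid. Its constant part $W^*(t_{-m})+\tilde{w}(m,m)$ is computed either with $m$ deleted or along the fixed path $L^*_m$, and for each secondary node $i$ the quantities $v_i$, $\tilde{w}(i,i)$, $\tilde{w}(i,m)$ all live on $t_{-\gamma_i}$, whose construction strips $m$'s forward edges; none of these see $v_m$. Next I would prove $v_m\ge B^*_m$, which yields individual rationality. The bound $v_m\ge W^*(t_{-m})+\tilde{w}(m,m)$ is just $W^*(t)\ge W^*(t_{-m})$ rewritten through $W^*(t)=v_m-\tilde{w}(m,m)$, and for each secondary node $i$ the bound $v_m\ge v_i-\tilde{w}(i,i)+\tilde{w}(i,m)$ is exactly the statement that $m$, not $i$, is allocated in the efficient allocation $\pi^*(t_{-\gamma_i})$, which holds because a secondary node wins only when $m$ abstains.

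I would then establish the diffusion direction: $B^*_m(r'_m)$ is non-increasing as $m$'s announced neighbor set grows, so full diffusion $r'_m=r_m$ minimizes the payment. The constant part is unaffected by $r'_m$; for the secondary-node part the per-node values $v_i-\tilde{w}(i,i)+\tilde{w}(i,m)$ do not depend on $r'_m$ (they are carried on $t_{-\gamma_i}$ and on paths terminating at $i$ or at $m$), so it suffices to show the set $SN^*$ can only lose members when $m$ diffuses more. This holds because enlarging $r'_m$ only injects downstream competitors into $\pi^*(\{t_{-\gamma_i}\setminus t'_g\}\cup(v'_g=nil,r'_g))$, and an extra competitor can only cause an upstream candidate $i$ to stop winning there, i.e.\ to drop out of $SN^*$. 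With these properties the threshold argument is immediate: a deviation $(v'_m,r'_m)$ that still lets $m$ win leaves her value at $v_m$ but charges $B^*_m(r'_m)\ge B^*_m(r_m)$, so her utility cannot exceed $v_m-B^*_m(r_m)$; a deviation that makes her lose while leaving her off the realized trading path yields utility at most $0\le v_m-B^*_m(r_m)$.

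The step I expect to be the main obstacle is the one remaining losing case, where underbidding relocates the winner to a node downstream of $m$ and turns $m$ into an intermediary with utility $W^*(\hat{t}_{-\gamma_m})-W^*(\hat{t}_{-m})$, with $\hat{t}$ the deviated profile. Bounding this quantity by the truthful utility requires $W^*(\hat{t}_{-m})=W^*(t_{-m})$ together with $W^*(\hat{t}_{-\gamma_m})\le W^*(t)$, and care is needed because $\gamma_m$ is recomputed along the new efficient path so that $m$'s position and its follower $(m+1)^*$ have changed. The second delicate point is making the $SN^*$-shrinking claim fully rigorous, namely that additional diffusion can never create a new secondary node; both points turn on the optimal-substructure property of shortest paths and on the node-independence built into the construction of $\gamma_i$.
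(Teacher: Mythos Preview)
Your plan matches the paper's proof in structure: both first observe (via the remark after Definition~\ref{O_nodes}) that the winner must be $m$, and both argue that the secondary-node set $SN^*$ can only shrink as $r'_m$ grows, so $B^*_m$ is minimized at full diffusion. Your threshold reasoning for deviations that keep $m$ the winner, and your individual-rationality bound $v_m\ge B^*_m$, are the same as what the paper uses implicitly.

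The one place you diverge is the case you single out as the ``main obstacle'': $m$ underbids and becomes an intermediary on the new winning path. The paper does not bound $W^*(\hat t_{-\gamma_m})-W^*(\hat t_{-m})$ at all; instead it argues this configuration cannot arise. The key observation is that, since a secondary node $j$ wins $\pi^*(t_{-\gamma_j})$ once $m$'s bid is suppressed, $m$ is the \emph{only} node beating $j$ there. Consequently, if $m$ lowers her bid so that she is no longer the winner, the first secondary node along $L^*_m(t)$ now satisfies the winning condition in line~6 of Algorithm~\ref{wdm_all} and is declared the winner before the iteration ever reaches $m$; by optimal substructure $L^*_m(t)$ is a prefix of the new efficient path whenever $m$ lies on it, so $j$ is indeed visited first and $\gamma_j$ is unchanged. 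Hence $m$'s utility is simply $0$, not an intermediary reward, and the comparison $0\le v_m-B^*_m(t)$ finishes the case. Your plan to bound the intermediary utility and to track how $\gamma_m$ is recomputed is therefore unnecessary; the paper's shortcut is both simpler and avoids the delicate recomputation issues you anticipated.
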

\begin{proof}
	First of all, the winner must be $m$ if there exist secondary nodes according to Def. \ref{O_nodes}. For any secondary node $j$, she is still a secondary node for any $m$'s diffusion strategy $r_m'\subset r_m$ as long as $m$ remains to be the winner. Therefore $m$'s payment is minimized by diffusing the sale information to all her neighbors. Since there exist secondary nodes, then for any secondary node $j$, the winner $m$ is the only node that beats $j$ in $\pi^*(t_{-\gamma_j})$. This leads to the fact that when $m$ is trying to become a node along the winning path by lowering her bid, the first secondary node in type profile $t$ will become the winner in advance according to Alg. \ref{wdm_all}, line 6, whereby $m$'s utility becomes zero. Therefore, no matter what strategies $m$ commits, she cannot increase her utility.
\end{proof}

\begin{lemma}\label{kick_off_nodes}
	Given any type profile $t$ and any node $i\in L^*_m(t)\setminus L^*_g(t)$, $i$ cannot increase her utility by misreporting.
\end{lemma}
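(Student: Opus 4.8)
The plan is to show that each $i\in L^*_m(t)\setminus L^*_g(t)$ already attains its best possible utility, namely zero, so that no misreport helps. By the optimal-substructure observation stated just before Definition \ref{gamma}, $L^*_g(t)=\{1^*,\dots,g\}$ is the prefix of $L^*_m(t)$ ending at the winner; hence $i$ lies strictly after $g$ on the efficient path and is neither the winner nor a member of $L^*_g(t)\setminus\{g\}$. Since Alg.~\ref{wdm_pay} assigns a (possibly negative) payment only to $g$ and to the nodes of $L^*_g(t)\setminus\{g\}$, node $i$ pays $0$ and wins nothing, so $u_i(t)=0$. It therefore suffices to rule out the two ways of reaching positive utility under a report $t_i'=(v_i',r_i')$: (i) $i$ itself becomes the winner, and (ii) $i$ ends up as a non-winning node of the new winning path, where its payment $W^*(t'_{-i})-W^*(t'_{-\gamma_i})\le 0$ would hand it a positive subsidy.

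I would dispose of case (i) with the critical-value argument already carried out in the paragraph that defines $B^*_g(t')$. When $i$ wins, Alg.~\ref{wdm_pay} charges it exactly this critical value, which that discussion shows is the minimum bid $i$ must announce to be allocated in $\pi^*(t'_{-\gamma_i})$ and is independent of $v_i'$; in particular it is never smaller than $W^*(t'_{-i})+\tilde{w}(i,i)$, and, when the deviation creates secondary nodes, it also dominates the corresponding terms $v_j'-\tilde{w}(j,j)+\tilde{w}(j,i)$. The remaining task is to certify that this critical value is at least $i$'s net valuation $v_i-\tilde{w}(i,i)$, which I would do by exploiting the fact that $i$ loses truthfully: the upstream winner $g$ passes its qualification test in $\pi^*(t_{-\gamma_g})$, and cutting $\gamma_g$ destroys the cheap route that reaches $i$ through $g$, so any competitor certified by that test reappears---either through the $W^*(t'_{-i})$ term or as a secondary node---as a lower bound on $i$'s critical value that is at least $v_i-\tilde{w}(i,i)$. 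Consequently $u_i=v_i-B^*_i(t')\le 0$.

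For case (ii) I would argue that whenever $i$ does not win, the upstream winner still wins and $i$ stays off the winning path. The mechanism scans $L^*_{m'}(t')$ from head to tail and stops at the first qualifying node (Alg.~\ref{wdm_all}), so if any node ahead of $i$ qualifies, $i$ cannot be a relay. Using a node-independence argument in the spirit of property~2 of Definition \ref{alpha}, cutting $\gamma_j$ for an upstream node $j$ severs the whole sub-network lying beyond $j$, which contains $i$; hence the allocation chosen by $\pi^*(t'_{-\gamma_j})$, and therefore $j$'s qualification test, is unaffected by $i$'s report \emph{unless} $i$ is itself the node allocated there, which is precisely case (i). Thus, absent a win, the same upstream node qualifies first and wins, leaving $i$ off $L^*_g$ with $u_i=0$. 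Shrinking the diffusion to $r_i'\subsetneq r_i$ only deletes nodes reachable beyond $i$ and can never move $i$ ahead of an upstream winner, so it is covered by the same reasoning.

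The step I expect to be the main obstacle is this node-independence claim underlying case (ii) together with the net-value bound in case (i): because bids and edge weights are mingled, a deviation by $i$ can in principle reshape the global efficient path $L^*_{m'}(t')$ and the entire candidate order, so the delicate point is to show rigorously that the effect of $i$'s report is localized to the region beyond each upstream cut $\gamma_j$. Once this localization is established, both cases close, and together with Lemma \ref{O_winner} the incentive compatibility of WDM follows.
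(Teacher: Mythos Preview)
Your overall decomposition into ``$i$ becomes the winner'' versus ``$i$ stays a non-winning relay'' matches the paper, but the crucial step in case~(i) is not pinned down, and the arithmetic is off.

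The paper's argument for case~(i) is a single clean observation you do not quite make: when $i$ overbids so as to become the winner, the \emph{former} winner $g$ itself becomes a secondary node. Indeed, $g$ was the node allocated in $\pi^*(t_{-\gamma_g})$ under the truthful profile, so once $i$ is removed (replaced by a nil bid) from that sub-instance, $g$ again wins there; this is precisely Definition~\ref{O_nodes}. The secondary-node clause of Alg.~\ref{wdm_pay} then forces $i$'s payment to be at least $v_g-\tilde w(g,g)+\tilde w(g,i)=W^*(t_{-\gamma_g})+\tilde w(g,i)$, and since $i$ lay on the efficient path beyond $g$ one has $v_i-\tilde w(g,i)<W^*(t_{-\gamma_g})$, hence $u_i<0$. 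Your proposal only says ``any competitor certified by that test reappears \ldots\ as a secondary node,'' without naming $g$, and then asks for the critical value to dominate $v_i-\tilde w(i,i)$; but the utility is $v_i-B^*_i$, so you actually need $B^*_i\ge v_i$, and the bound $W^*(t'_{-i})+\tilde w(i,i)$ alone will not give that (e.g.\ $i$ can be the global social-welfare maximizer while still losing to an upstream $g$). The $g$-as-secondary-node step is exactly what closes this gap.

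For case~(ii) the paper is more direct than your node-independence plan: it simply notes that $g$'s qualification test is the allocation in $\pi^*(t_{-\gamma_g})$, and $i$ can disturb that outcome only by overbidding enough to beat $g$ there, which collapses back to case~(i). Invoking the $\alpha$-properties of Definition~\ref{alpha} is misplaced here, since those pertain to the unweighted CDM family, not to $\gamma$ in WDM; and a change in $r_i'$ is handled separately by observing that it either leaves $i$ on $L^*_m$ (so $g$ still qualifies first) or removes $i$ from $L^*_m$ altogether.
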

\begin{proof}
	Firstly, when $i$ spreads the sale information to only a part of her neighbors, there are two possible results: she is still an node in $L^*_m(t)$ in which case $g$ still wins according to Def. \ref{inter_nodes} and Alg. \ref{wdm_all}; or she is out of $L^*_m(t)$ which makes her out of $L^*_g(t)$ either. Therefore, her utility remains unchanged by misreporting $r_i$. Secondly, note that unless node $i$ beats $g$ in $\pi^*(t_{-\gamma_g})$, otherwise $g$ must be the winner and $i$'s utility remains zero according to Alg. \ref{wdm_pay}. Since $W^*(t_{-\gamma_g})$ is greater than $v_j-\tilde{w}(g,j)$ for any $j\in L^*_m(t)\setminus L^*_g(t)$, she can beat $g$ only by increasing her bid such that she wins in Alg. \ref{wdm_all}. Due to the fact that node $g$ is the winner in $\pi^*(t_{-\gamma_g})$, node $g$ would be a secondary node if $i$ becomes the winner. In this case, node $i$ has to pay at least $v_g-\tilde{w}(g,g)+\tilde{w}(g,i)=W^*(t_{-\gamma_g})+\tilde{w}(g,i)$. Therefore, the utility of $i$ by being a winner is at most $v_i-(W^*(t_{-\gamma_g})+\tilde{w}(g,i))=(v_i-\tilde{w}(g,i))-W^*(t_{-\gamma_g})<0$.
\end{proof}
\begin{theorem}\label{mlma_outcome}
	The weighted diffusion mechanism is individually rational and incentive-compatible.
\end{theorem}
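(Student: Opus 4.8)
The plan is to derive individual rationality directly from the signs of the payments, and then to establish incentive compatibility by partitioning the nodes according to their position relative to the winning path $L^*_g(t)$, discharging two of the resulting cases with Lemmas~\ref{O_winner} and \ref{kick_off_nodes} and arguing the remaining cases by hand.

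For \emph{individual rationality}, I would first note that every node outside $L^*_g(t')$ pays zero, since the payment loop of Alg.~\ref{wdm_pay} ranges only over $L^*_g(t')$; such nodes therefore have utility zero. For a non-winner $i\in L^*_g(t')\setminus\{g\}$ the payment is $x_i(t')=W^*(t'_{-i})-W^*(t'_{-\gamma_i})$. Deleting the node $i$ removes at least as much of the graph as deleting only the edges $\gamma_i$ (which leaves $i$ itself in place), so every trading path feasible under $t'_{-i}$ avoids $i$ and hence is also feasible under $t'_{-\gamma_i}$; this gives $W^*(t'_{-i})\le W^*(t'_{-\gamma_i})$, so $x_i(t')\le 0$ and the utility is nonnegative. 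For the winner $g$, the payment $x_g(t')=B^*_g(t')$ equals the critical value constructed before the theorem, namely the smallest bid with which $g$ could still win; because $g$ does win under the truthful profile, $v_g\ge B^*_g(t')$ and her utility $v_g-B^*_g(t')\ge 0$.

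For \emph{incentive compatibility}, I would split $V\setminus\{s\}$ into four groups and show no deviation is profitable. (i) For $i\notin L^*_m(t)$ the utility is zero, and the only way to change it is to overbid until $i$ becomes the winner; but by an argument parallel to Lemma~\ref{kick_off_nodes}, any such winning bid forces a payment exceeding $v_i$, so the deviation is loss-making. (ii) For $i\in L^*_m(t)\setminus L^*_g(t)$, Lemma~\ref{kick_off_nodes} already rules out any profitable misreport. (iii) For a non-winner $i\in L^*_g(t)\setminus\{g\}$ with payment $W^*(t'_{-i})-W^*(t'_{-\gamma_i})$, the term $W^*(t'_{-i})$ is independent of $i$'s report, while by the cut structure of $\gamma_i$ (Def.~\ref{gamma}) and the monotonicity of the shortest-path allocation the term $W^*(t'_{-\gamma_i})$ is largest---hence the payment smallest---when $i$ diffuses to all her neighbors; winning is also unworthwhile because, as in case (i), it would cost more than $v_i$. (iv) For the winner $g$: if secondary nodes exist, Lemma~\ref{O_winner} applies directly; if none exist, then $B^*_g(t')=W^*(t'_{-g})+\tilde{w}(g,g)$ is independent of $g$'s own bid, is minimized over diffusion strategies when $g$ diffuses to everyone, and lowering the bid merely to abandon the item yields utility zero, which is no improvement. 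Collecting the four cases shows truthful reporting is dominant, and together with the payment-sign argument this establishes both properties.

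The step I expect to be the crux is case (iv) with no secondary nodes, because there a deviation mixes two channels---the reported bid and the reported neighbor set---and I must verify that $B^*_g(t')=W^*(t'_{-g})+\tilde{w}(g,g)$ is simultaneously bid-invariant and diffusion-minimal, so that no combined manipulation of valuation and diffusion can depress $g$'s payment below it while keeping $g$ a winner. Establishing diffusion-minimality relies on the same cut/monotonicity behavior of $\gamma_g$ used in case (iii); thus confirming that $\gamma_{i^*}$ genuinely acts as an ``edge cut following $i^*$'' under arbitrary shrinkage of the diffusion set, and that $W^*(t'_{-\gamma_{i^*}})$ is monotone in that set despite $\gamma_{i^*}$ itself depending on the report, is the technical heart of the argument.
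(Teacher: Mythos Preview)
Your proposal follows essentially the same route as the paper: the same IR argument based on $t'_{-i}\subset t'_{-\gamma_i}$ and $v_g\ge B^*_g(t')$, and the same four-case IC decomposition, with Lemmas~\ref{O_winner} and~\ref{kick_off_nodes} discharging two of the cases.

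There is one slip in case~(iv). You write that when there are no secondary nodes, ``lowering the bid merely to abandon the item yields utility zero.'' That is not quite right: if $g$ lowers her bid so she is no longer the winner, the efficient path $L^*_m$ need not change (it depends on diffusion, not bids), so $g$ can remain on the new winning path as a \emph{non-winner} and receive the Case~(iii) utility $W^*(t'_{-\gamma_g})-W^*(t'_{-g})$, which is generally positive, not zero. The paper handles this by observing that this quantity is at most the truthful winner utility $v_g-\tilde w(g,g)-W^*(t_{-g})=W^*(t_{-\gamma_g})-W^*(t_{-g})$ (since $g$'s lowered bid can only decrease $W^*(\cdot_{-\gamma_g})$ while $W^*(\cdot_{-g})$ is unchanged). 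Once you make this comparison explicit, your case~(iv) closes exactly as in the paper; the rest of your plan matches.
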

\begin{proof}
	Firstly, we show that the weighted diffusion mechanism is individually rational. For any node $i\notin L^*_g(t')$, her payment is zero. For any node $i\in L^*_g(t')\setminus\{g\}$, her utility is $W^*(t'_{-\gamma_i})-W^*(t'_{-i})\geq 0$ since $t'_{-i}\subset t'_{-\gamma_i}$. As for the winner $g$, her payment is $B^*_g(t')$ which is no more than $v_g$.
	
	Next we show WDM is IC through following cases.
	
	Case 1: For $i\notin L^*_m(t)$, since $\{t_j\}_{j\in L^*_m(t)}\subseteq t_{-i}$, the only chance for $i$ to have a non-zero utility is to bid high enough such that she becomes the winner in $\pi^*(t')$. And in this case she has to pay a price of at least $W^*(t)+\tilde{w}(i,i)$ which is higher than her true value $v_i$ because $W^*(t)>v_i-\tilde{w}(i,i)$.
	
	Case 2: For $i\in L^*_g(t)\setminus\{g\}$, her utility is $W^*(t_{-\gamma_i})-W^*(t_{-i})$. Since $W^*(t_{-\gamma_i})\ge v_i-\tilde{w}(i,i)$, she cannot do better by winning, otherwise she pays at least $W^*(t_{-i})+\tilde{w}(i,i)$.
	The latter term $W^*(t_{-i})$ is independent of $i$'s strategy and the former term $W^*(t_{-\gamma_i})$ is independent of $v_i$ (otherwise she becomes the winner according to Alg. \ref{wdm_all}, line 6). Due to Def. \ref{inter_nodes}, for any $r_i'\subset r_i$ $I^*_i\subseteq I^{*'}_i\cup r_i\setminus r_i'$. This property induces that $\gamma_i\subseteq \gamma_i'\cup r_i\setminus r_i'$. That is, after removing $\gamma_i$ and $\gamma_i'$ from the graph separately, the set of non-nil type nodes follows that $-\gamma_i'\subseteq -\gamma_i$.
	Therefore the set of non-nil types in $t_{-\gamma_i}$ is the largest when $i$ spreads the sale information to all her neighbors, this maximizes her utility.
	
	Case 3: For winner $g$, her utility is $v_g-B^*_g(t)$.
	If there are no secondary nodes, then she pays exactly $W^*(t_{-g})+\tilde{w}(g,g)$ and her utility is $v_g-\tilde{w}(g,g)-W^*(t_{-g})=W^*(t_{-\gamma_g})-W^*(t_{-g})$. When she misreports to be in Case 2, her utility becomes $W^*(t_{-\gamma'_g})-W^*(t_{-g})$. Since the set of non-nil types in $-\gamma_g$ is maximized when $g$ diffuses the sale to all her neighbors, misreporting is a bad choice for her since $W^*(t_{-\gamma_g})\ge W^*(t_{-\gamma'_g})$. On the other hand, the more non-nil nodes exist in the graph, the harder the secondary nodes could exist. Therefore, if $g$ diffuses the sale only to a part of her neighbors, then secondary nodes would be created which in turn will increase her payment according to Alg. \ref{wdm_pay}, lines 4-8. Hence, truthfully reporting is the best strategy for the winner. In addition, if there exist secondary nodes, then misreporting will harm $g$'s profit according to Lemma \ref{O_winner}.

	Case 4: For any node $i$ in $L^*_m(t)\setminus L^*_g(t)$, according to Lemma \ref{kick_off_nodes}, misreporting would decrease her utility.
	
\end{proof}

For proving that WDM dominants the Vickrey auction, we firstly pick out the nodes with zero payments.

\begin{lemma}\label{zero_x}
	For $i\in L^*_g(t)\setminus\{g\}$, if the intersection of $\pi^*(t_{-\gamma_i})$ and $L^*_m(t)\setminus L^*_i(t)$ is not empty, then $i$'s payment is zero.
\end{lemma}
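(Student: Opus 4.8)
The plan is to reduce the claim to the single equality $W^*(t_{-i}) = W^*(t_{-\gamma_i})$. By line~6 of Alg.~\ref{wdm_pay}, for $i \in L^*_g(t)\setminus\{g\}$ the payment is $x_i(t) = W^*(t_{-i}) - W^*(t_{-\gamma_i})$, so it suffices to show that the two efficient welfares coincide. One direction, $W^*(t_{-\gamma_i}) \ge W^*(t_{-i})$, is immediate and already invoked in the proof of Theorem~\ref{mlma_outcome}: deleting the whole node $i$ removes at least as much as deleting only the edge cut $\gamma_i$, so $t_{-i}\subseteq t_{-\gamma_i}$ and the feasible allocation set can only shrink. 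Hence the entire content of the lemma lies in the reverse inequality $W^*(t_{-i}) \ge W^*(t_{-\gamma_i})$ under the stated hypothesis.

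For the reverse inequality I would show that the efficient path $\pi^*(t_{-\gamma_i})$ can be realized without node $i$ at all. If $i \notin \pi^*(t_{-\gamma_i})$, then this path uses no edge incident to $i$ and every node on it stays reachable after $i$ is deleted; it is therefore still a feasible trading path under $t_{-i}$, which gives $W^*(t_{-i}) \ge W^*(\pi^*(t_{-\gamma_i})) = W^*(t_{-\gamma_i})$ and closes the argument. So the whole proof turns on the claim: \emph{under the hypothesis $\pi^*(t_{-\gamma_i}) \cap (L^*_m(t)\setminus L^*_i(t)) \neq \emptyset$, the path $\pi^*(t_{-\gamma_i})$ does not contain $i$.}

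The main obstacle is exactly this claim, and the key tool is the optimal-substructure property of shortest paths (valid since $G$ has no negative cycles), combined with Def.~\ref{gamma} and Def.~\ref{inter_nodes}. I would argue by contradiction: suppose $i\in\pi^*(t_{-\gamma_i})$. Since $\gamma_i$ removes only edges leaving $i$, the shortest trading path to $i$ is unchanged, so by optimal substructure the prefix of $\pi^*(t_{-\gamma_i})$ up to $i$ equals $L^*_i(t)=\{1^*,\dots,i\}$, which contains no node of $L^*_m(t)\setminus L^*_i(t)$. It then remains to control what lies after $i$: if $i$ is the terminal, the whole path is $\{1^*,\dots,i\}$ and meets no downstream node; otherwise the successor $j$ of $i$ satisfies $j\notin I_i\cup\{(i+1)^*\}$ (those edges are cut), so $j$ is a non-intermediary neighbor, which by Def.~\ref{inter_nodes} has no neighbor other than $i$ and thus forces the path to stop at $j$. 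The delicate point, which I expect to require the most care, is verifying that this terminal $j$ cannot itself be a downstream node: any $(l)^*$ with $l\ge i+1$ that is adjacent to $i$ is either $(i+1)^*$ or has a path-neighbor (e.g.\ its predecessor $(l-1)^*$) distinct from $i$, hence is an intermediary, so the edge $(i,(l)^*)$ lies in $\gamma_i$ and is removed. In every case $\pi^*(t_{-\gamma_i})$ meets no node of $L^*_m(t)\setminus L^*_i(t)$, contradicting the hypothesis. This forces $i\notin\pi^*(t_{-\gamma_i})$ and, by the reduction above, yields $x_i(t)=0$.
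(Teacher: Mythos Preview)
Your approach is essentially the paper's: both reduce the claim to $W^*(t_{-i})=W^*(t_{-\gamma_i})$ by arguing that, under the hypothesis, $i\notin\pi^*(t_{-\gamma_i})$, invoking the edge-cut role of $\gamma_i$ together with the optimal-substructure property of shortest paths; the paper simply asserts the cut property (cf.\ the remark after Def.~\ref{gamma}) while you unpack it case by case. One small fix: in your last step, Def.~\ref{inter_nodes} requires the witness neighbor to lie in $r_{(l)^*}$ (an out-neighbor), so you should cite the \emph{successor} $(l{+}1)^*$ rather than the predecessor $(l{-}1)^*$ to conclude that a downstream $(l)^*$ adjacent to $i$ is an intermediary.
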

\begin{proof}
	According to Def. \ref{gamma}, if we remove $\gamma_i$ from the graph, then all nodes in $L^*_m(t)\setminus L^*_i(t)$ can only be reached by other nodes but not $i$. Therefore, if $\pi^*(t_{-\gamma_i})$ and $L^*_m(t)\setminus L^*_i(t)$ have some common elements, according to the optimal substructure property of shortest path, node $m$ must be the winner in $\pi^*(t_{-\gamma_i})$ and node $i\notin \pi^*(t_{-\gamma_i})$. Consequently, we have that $W^*(t_{-\gamma_i})$ must equal $W^*(t_{-i})$ and therefore $x_i(t)=0$.
\end{proof}
For convenience, denote $L_g^\#(t)$ by the nodes in $L_g^*(t')$ who satisfy $\pi^*(t_{-\gamma_i})\cap L^*_m(t)\setminus L^*_i(t)=\emptyset$, then according to Lemma \ref{zero_x} only $L^\#_g(t)$ can have non-zero payments.
\begin{theorem}\label{bdm_bb}
The weighted diffusion mechanism dominates the Vickrey auction.
\end{theorem}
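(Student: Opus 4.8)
The plan is to prove the two inequalities packed into the word ``dominates'' separately, exactly as in the revenue argument of Theorem~\ref{cdm_do}: that the social welfare realized by WDM is at least the Vickrey welfare $\max_{i\in r_s}(v_i-w(s,i))$, and that the seller's revenue is at least the Vickrey revenue, namely the second largest value of $v_i-w(s,i)$ over $i\in r_s$ (the natural second-price quantity once the transport cost is netted out). Throughout I would use the fact that every seller-neighbour survives each cut, since by Def.~\ref{gamma} the set $\gamma_i$ consists only of edges leaving $i^*$, so each $j\in r_s$ stays reachable through its direct edge $(s,j)$ no matter which $\gamma_i$ is removed.

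For efficiency, WDM hands the item to $g$ along $L^*_g(t)$, so its welfare equals $v_g-\sum_{(l,l+1)\in L^*_g(t)}w(l,l+1)$. Because $g$ is the efficient winner of $\pi^*(t_{-\gamma_g})$ and deleting the out-edges $\gamma_g$ does not change the distance to $g$, this value is exactly $W^*(t_{-\gamma_g})$. Since every $i\in r_s$ lies in $t_{-\gamma_g}$ with shortest distance at most $w(s,i)$, we get $W^*(t_{-\gamma_g})\ge \max_{i\in r_s}(v_i-w(s,i))$, the Vickrey welfare; this settles the efficiency half cleanly.

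For revenue I would first expand $Rev(t)=\sum_{i\in L^*_g(t)\setminus\{g\}}x_i(t)+x_g(t)-\sum_{(l,l+1)\in L^*_g(t)}w(l,l+1)$. Using $x_i(t)=W^*(t_{-i})-W^*(t_{-\gamma_i})$ from Alg.~\ref{wdm_pay}, the winner bound $x_g(t)=B^*_g(t)\ge W^*(t_{-g})+\tilde w(g,g)$, and $\tilde w(g,g)=\sum_{(l,l+1)\in L^*_g(t)}w(l,l+1)$, the path-weight terms cancel and leave $Rev(t)\ge W^*(t_{-g})+\sum_{i\in L^*_g(t)\setminus\{g\}}\bigl(W^*(t_{-i})-W^*(t_{-\gamma_i})\bigr)$. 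I would then telescope: writing $f(i)=W^*(t_{-i^*})$ for the path nodes and $f(g)=W^*(t_{-g})$, if one shows $W^*(t_{-\gamma_i})\le f(i+1)$ for every $i\in L^*_g(t)\setminus\{g\}$, the sum collapses to $Rev(t)\ge f(1)=W^*(t_{-1^*})$. Finally, since $1^*\in r_s$ and removing it leaves every other seller-neighbour in place, $W^*(t_{-1^*})\ge\max_{i\in r_s\setminus\{1^*\}}(v_i-w(s,i))$, which is at least the second largest adjusted value over $r_s$, i.e.\ the Vickrey revenue.

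The crux, and the step I expect to fight with, is the uniform inequality $W^*(t_{-\gamma_i})\le W^*(t_{-(i+1)^*})$. Unlike the unweighted CDM, where $\alpha_i$ is a genuine node-cut and the comparison is immediate, here $\gamma_i$ only severs the forward edges of $i^*$, so $(i+1)^*$ may remain reachable by a detour and edge-removal is not directly comparable to node-removal. I would split on Lemma~\ref{zero_x}. For $i\in L^{\#}_g(t)$ the efficient allocation $\pi^*(t_{-\gamma_i})$ avoids $L^*_m(t)\setminus L^*_i(t)$ and in particular does not use $(i+1)^*$, so that same allocation stays feasible after additionally deleting $(i+1)^*$, giving $W^*(t_{-\gamma_i})\le W^*(t_{-(i+1)^*})$. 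For $i\notin L^{\#}_g(t)$ Lemma~\ref{zero_x} yields $W^*(t_{-\gamma_i})=W^*(t_{-i^*})$, so the bound reduces to the monotonicity $W^*(t_{-i^*})\le W^*(t_{-(i+1)^*})$ along the efficient path; establishing this monotonicity (removing an earlier path node cannot leave more welfare than removing a later one) through the optimal-substructure property of shortest paths is the delicate point that makes the telescoping go through.
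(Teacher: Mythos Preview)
Your efficiency argument is fine and essentially matches the paper's (the paper compares against $W^*(t_{-\gamma_{1^*}})$ instead of $W^*(t_{-\gamma_g})$, but either works).

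The revenue argument, however, has a real gap precisely where you flag it as ``delicate.'' The monotonicity $W^*(t_{-i^*})\le W^*(t_{-(i+1)^*})$ along the efficient path is \emph{false} in general, so your case $i\notin L^{\#}_g(t)$ cannot be closed. Take $s\to A$ (weight $0$), $s\to B$ (weight $5$), $A\to B$ (weight $0$), $B\to C$ (weight $0$), with $v_A=2$, $v_B=1$, $v_C=10$. The efficient path is $\{A,B,C\}$ and the WDM winner is $g=C$. Here $\gamma_A=\{(A,B)\}$ and $\pi^*(t_{-\gamma_A})$ goes through $B$ via the direct edge $s\to B$, so $A\notin L^{\#}_g(t)$; by Lemma~\ref{zero_x} indeed $W^*(t_{-\gamma_A})=W^*(t_{-A})=5$. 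But $W^*(t_{-B})=2$, so $W^*(t_{-A})\not\le W^*(t_{-B})$. Consequently your telescoping lower bound collapses to $W^*(t_{-g})+0+0=2$, which does not reach the target $W^*(t_{-1^*})=5$; the optimal-substructure property of shortest paths gives you nothing here because you are comparing welfares after removing two \emph{different} nodes, not nested subgraphs.

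The paper sidesteps this entirely: it first drops the zero-payment terms via Lemma~\ref{zero_x} and telescopes only over $L^{\#}_g(t)$, aiming for the weaker bound $Rev(t)\ge W^*(t_{-1^{\#}})$ rather than $W^*(t_{-1^*})$. For consecutive $j-1,j\in L^{\#}_g(t)$ the required inequality $W^*(t_{-\gamma_{j-1}})\le W^*(t_{-j})$ follows immediately from the defining property of $L^{\#}_g$, since $\pi^*(t_{-\gamma_{j-1}})$ avoids all of $L^*_m(t)\setminus L^*_{j-1}(t)$ and in particular avoids $j$. Finally $1^{\#}$, like $1^*$, can coincide with at most one seller-neighbour, so $W^*(t_{-1^{\#}})$ still dominates the Vickrey second price. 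In short: do not try to force the telescoping through every path node; restrict to $L^{\#}_g(t)$ and accept the (sufficient) bound $W^*(t_{-1^{\#}})$.
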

\begin{proof}
According to the payment policy of WDM and Lemma \ref{zero_x}, the seller's revenue can be denoted as
	\begin{small}
		\begin{align*}
		Rev(t) &=\sum_{j\in V\setminus\{s\}}{x_j(t)}-\sum_{l\in L^*_g(t)\setminus\{g\}}{w(l,l+1)}\\
		&=\sum_{j\in L^\#_g(t)}{x_j(t)}-\tilde{w}(g,g)\\
		&=\sum_{j\in L^\#_g(t)\setminus\{g\} }(W^*(t_{-j})-W^*(t_{-\gamma_j}))\\
		&+B^*_g(t)-\tilde{w}(g,g)\\
		&\geq \sum_{j\in L^\#_g(t)\setminus\{g\} }(W^*(t_{-j})-W^*(t_{-\gamma_j}))+W^*(t_{-g})\\
		&=W^*(t_{-1^\#})+\sum_{j\in L^\#_g(t)\setminus\{1^\#\} }(W^*(t_{-{j}})-W^*(t_{-\gamma_{j-1}}))\\
		&\geq W^*(t_{-1^\#}).
		\end{align*}
	\end{small}
	The first inequation holds because $B^*_g(t)$ is at least $W^*(t_{-g})+\tilde{w}(g,g)$ according to Alg. \ref{wdm_pay}, line 9. Since for any $j\in L_g^\#(t)$, we have that $\pi^*(t_{-\gamma_j})\cap L^*_m(t)\setminus L^*_i(t)=\emptyset$. Then we have $W^*(t_{-{j}})\geq W^*(t_{-\gamma_{j-1}})$ for any $j\in L_g^\#(t)\setminus \{1^\#\}$ which induces the second inequity.

	Because $\{t_i\}_{i\in r_s}\subseteq t_{-1^\#}\cup \{t_{1^\#}\}$, we have that $W^*(t_{-1^\#})$ is at least the second highest bid in $r_s$. In addition, since $v_g-\tilde{w}(g,g)\geq W^*(t_{-\gamma_{1^*}})$ and $\{t_i\}_{i\in r_s}\subseteq t_{-\gamma_{1^*}}$, then the allocation efficiency of the weighted diffusion mechanism is no less than that given in the Vickrey auction.
\end{proof}

It is worth noting that when the weights are zero, WDM degenerates to a CDM.
\section{Conclusions}
In this paper, we formulate the model of information diffusion and auction on both unweigted and weighted graphs and provide generic solutions which not only guarantee IC and IR but also can incentivize information sharing. The key techniques for realizing these mechanisms are graph cut analysis and the integration of bidder's private valuation and social links. Due to the introduction of information sharing incentive, all the instance mechanisms under our solution framework improve both the seller's revenue and the allocation efficiency comparing with the Vickrey auction. In practice, these mechanisms can be applied in a recursive way: any new comer will be treated as a bidder if she submits a two-dimensional type. This recursion is terminated if there is no new submission. Then the seller calculates the trading path, and determines the allocation and payments.
One immediate future work is: although this work realizes seller's revenue optimization, the problem of its maximization is still unclear. Moreover, what is the underlying impact of $\alpha_i$ on the seller's revenue and the allocation efficiency, and how to design fast algorithms determining the membership of these sets are also important.



\section*{Acknowledgments}
This work was supported by the National Natural Science Foundation of China (NNSFC) Grant Number
71601029 and JSPS KAKENHI Grant Number JP17H00761.
\bibliographystyle{named}
\bibliography{ijcai19}

\end{document}